\theoremstyle{definition}
\newtheorem{theorem}{Theorem}
\newtheorem{proposition}[theorem]{Proposition}
\newtheorem{definition}[theorem]{Definition}
\newtheorem{example*}[theorem]{Example*}
\newtheorem{examples*}[theorem]{Examples*}
\newtheorem{remark*}[theorem]{Remark*}
\newtheorem*{theorem*}{Theorem}
\newtheorem*{corollary*}{Corollary}
\newtheorem*{lemma*}{Lemma}
\newtheorem*{proposition*}{Proposition}
\tikzstyle{dot}=[inner sep=0.3mm, minimum width=2mm, minimum height=2mm, draw, shape=circle, font={\footnotesize}, tikzit fill=magenta]
\tikzstyle{hadamard}=[fill=zxhad, draw, inner sep=0.6mm, minimum height=1.5mm, minimum width=1.5mm, shape=rectangle, tikzit shape=rectangle, tikzit category=ZH-pf, tikzit fill=yellow]
\tikzstyle{small hadamard}=[hadamard]
\tikzstyle{lambda}=[hadamard, fill={rgb,255: red,180; green,180; blue,180}, tikzit shape=rectangle]
\tikzstyle{halfscalar}=[star, fill=black, draw=black, minimum size=8pt, inner sep=0pt]
\tikzstyle{box}=[shape=rectangle, text height=1.5ex, text depth=0.25ex, yshift=0.2mm, fill=white, draw=black, minimum height=3mm, minimum width=5mm, font={\small}]
\tikzstyle{Z dot}=[inner sep=0mm, minimum size=2mm, shape=circle, draw=black, fill={zx_green}, tikzit fill=green]
\tikzstyle{Z phase dot}=[minimum size=5mm, font={\footnotesize\boldmath}, shape=rectangle, rounded corners=2mm, inner sep=0.2mm, outer sep=-2mm, scale=0.8, tikzit shape=circle, draw=black, fill={zx_green}, tikzit draw=blue, tikzit fill=green]
\tikzstyle{X dot}=[Z dot, shape=circle, draw=black, fill={zx_red}, tikzit fill=red]
\tikzstyle{X phase dot}=[Z phase dot, tikzit shape=circle, tikzit draw=blue, fill={zx_red}, font={\footnotesize\color{black}\boldmath}, tikzit fill=red]
\tikzstyle{H box}=[hadamard]
\tikzstyle{st}=[star, star points=5, fill=white, draw=black, inner sep=1.2pt, line width=1.2pt, tikzit fill=blue, tikzit draw=red, tikzit category=ZH-pf]
\tikzstyle{triangle}=[regular polygon, regular polygon sides=3, fill=white, draw=black, inner sep=0pt, minimum width=1em, tikzit draw=blue, tikzit category=ZH-pf, tikzit fill=cyan]
\tikzstyle{not}=[fill={rgb,255: red,180; green,180; blue,180}, draw=black, shape=circle, font={$\neg$}, dot]
\tikzstyle{vertex}=[inner sep=0mm, minimum size=1mm, shape=circle, draw=black, fill=black]
\tikzstyle{vertex set}=[inner sep=0mm, minimum size=1mm, shape=circle, draw=black, fill=white, font={\footnotesize\boldmath}]
\tikzstyle{wide point}=[fill=white, draw, shape=isosceles triangle, shape border rotate=-90, isosceles triangle stretches=true, inner sep=0pt, minimum width=1.5cm, minimum height=6.12mm, yshift=-0.0mm]
\tikzstyle{medium gray box}=[semilarge box, fill={rgb,255: red,180; green,180; blue,180}]
\tikzstyle{small box}=[rectangle, fill=white, draw, minimum height=5mm, yshift=-0.5mm, minimum width=5mm, font={\small}]
\tikzstyle{small gray box}=[small box, fill={rgb,255: red,180; green,180; blue,180}]
\tikzstyle{medium box}=[rectangle, fill=white, draw, minimum height=10mm, yshift=-0.5mm, minimum width=8mm, font={\small}]
\tikzstyle{ddot}=[line width=1.6pt, inner sep=0mm, minimum width=2.5mm, minimum height=2.5mm, draw, shape=circle]
\tikzstyle{dd white}=[ddot, fill=white, tikzit draw=green]
\tikzstyle{dd white phase}=[white phase dot, line width=1.6pt, tikzit draw=yellow]
\tikzstyle{dd gray}=[ddot, fill={rgb,255: red,180; green,180; blue,180}, tikzit draw=green]
\tikzstyle{dd gray phase}=[gray phase dot, line width=1.6pt, tikzit draw=yellow]
\tikzstyle{empty diagram}=[draw={gray!40!white}, dashed, shape=rectangle, minimum width=1cm, minimum height=1cm]
\tikzstyle{empty diagram small}=[draw={gray!50!white}, dashed, shape=rectangle, minimum width=0.6cm, minimum height=0.5cm]
\tikzstyle{white dot}=[Z dot]
\tikzstyle{white phase dot}=[Z phase dot, tikzit shape=circle, tikzit fill=white, tikzit draw=blue]
\tikzstyle{gray dot}=[X dot, tikzit fill={rgb,255: red,180; green,180; blue,180}]
\tikzstyle{gray phase dot}=[X phase dot, tikzit shape=circle, tikzit draw=blue]
\tikzstyle{simple}=[-]
\tikzstyle{hadamard edge}=[-, dashed, dash pattern=on 2pt off 1pt, thick, draw=blue]
\tikzstyle{gray}=[-, draw={blue!60!white}, tikzit draw=blue]
\tikzstyle{blue}=[-, draw={blue!60!white}, tikzit draw=blue]
\tikzstyle{brace edge}=[-, tikzit draw=blue, decorate, decoration={brace,amplitude=1mm,raise=-1mm}]
\tikzstyle{diredge}=[->]
\tikzstyle{not edge}=[-, dashed, dash pattern=on 2pt off 1.5pt, thick, draw={rgb,255: red,255; green,68; blue,68}]
\tikzstyle{double edge}=[-, double, shorten <=-1mm, shorten >=-1mm, double distance=2pt]
\tikzstyle{boldedge}=[-, line width=1.6pt, shorten <=-0.17mm, shorten >=-0.17mm, tikzit draw=blue]
\newcommand\etc{etc\@ifnextchar.{}{.\@}\xspace}
\newcommand{\norm}[1]{\ensuremath{\lVert#1\rVert}}
\newcommand{\R}{\mathbb{R}}
\begin{document}
\title{Optimising quantum circuits is generally hard}
\date{August 12th 2024}

\author{John van de Wetering}
\email{john@vdwetering.name}
\homepage{http://vdwetering.name}
\affiliation{University of Amsterdam}

\author{Matthew Amy}
\email{meamy@sfu.ca}
\homepage{https://www.cs.sfu.ca/~meamy/}
\affiliation{Simon Fraser University}

\begin{abstract}
	In order for quantum computations to be done as efficiently as possible it is important to optimise the number of gates used in the underlying quantum circuits. In this paper we find that many gate optimisation problems for approximately universal quantum circuits are NP-hard. In particular, we show that optimising the T-count or T-depth in Clifford+T circuits, which are important metrics for the computational cost of executing fault-tolerant quantum computations, is NP-hard by reducing the problem to Boolean satisfiability. With a similar argument we show that optimising the number of CNOT gates or Hadamard gates in a Clifford+T circuit is also NP-hard. Again varying the same argument we also establish the hardness of optimising the number of Toffoli gates in a reversible classical circuit. We find an upper bound to the problems of T-count and Toffoli-count of $\text{NP}^{\text{NQP}}$.
	Finally, we also show that for any non-Clifford gate $G$ it is NP-hard to optimise the $G$-count over the Clifford+$G$ gate set, where we only have to match the target unitary within some small distance in the operator norm.
\end{abstract} 

\maketitle

\section{Introduction}

Applications of quantum computers can roughly be divided into two domains: those which work on relatively small and noisy devices, and those which work only on large-scale fault-tolerant machines.
In the first category exist for instance a variety of variational algorithms that combine classical optimization with small, parameterized quantum circuits~\cite{choi2019tutorial,tilly2022variational}. For these circuits, noise accumulates over time and after every operation. It is then crucial to optimise the \emph{depth} of the circuit, corresponding to the runtime of the computation, as well as the number of multi-qubit operations (like CNOT gates), as those gates tend to be much more noisy than single-qubit operations.

In the second category we find most of the proven speedups of quantum computers, like applications of Grover's~\cite{grover1996fast} or Shor's algorithm~\cite{shor1994algorithms} or solving quantum chemistry problems~\cite{bauer2020quantum}. For such large-scale quantum computations we require a way to combat the accumulation of errors. This can be done by using \emph{quantum error correction}~\cite{shor1995scheme}. This allows us to distribute the logical quantum information over multiple physical qubits, which we can combine with clever measurements in order to detect and correct potential errors. If we want to do operations on the encoded logical information, we however need to implement this in such a way on the physical hardware that it does not spread errors in an uncorrectable way. That is, we need to implement the logical operations in a \emph{fault-tolerant} manner. This puts restrictions on the types of gates we can use, and it makes certain gates much more expensive than they would be when implemented on a physical level~\cite{fowler2012surface}. In particular, in many fault-tolerant architectures \emph{non-Clifford} gates are much more expensive to implement than \emph{Clifford} gates. A common choice of non-Clifford gate to make the gate set universal is the \emph{T gate} $\text{diag}(1,e^{i\frac\pi4})$. While in recent works ways to implement the $T$ gate (via magic state distillation) have been improved significantly, their cost still encompasses a large part of all resources required for a computation~\cite{Litinski2019gameofsurfacecodes,Gidney2019efficientmagicstate,Gidney2021howtofactorbit,Litinski2019magicstate}.

Simplifying the story somewhat, we can hence say that optimising a fault-tolerant quantum circuit means optimising the number of T gates, known as the T-count, required to implement the circuit.
There have hence been a multitude of results, both heuristics and optimal algorithms, for optimising the T-count of a given circuit or unitary.
These can be divided into three classes. First, there are the methods that treat all non-Clifford phase rotations equivalently, and just fuse together phases that can be brought together using the sum-over-paths method or by representing the gates by a series of Pauli exponentials~\cite{amy2014polynomial,kissinger2019tcount,zhang2019optimizing,vandewetering2024optimal}. These techniques are efficient, but generally don't find the optimal T-count of a circuit. Second, there are the methods that synthesise directly from the matrix, and do not optimise a circuit in place~\cite{gheorghiu2022t,gheorghiu2022quasi}. These produce optimal T-counts by design, but are infeasible to run on circuits beyond just a couple of qubits in size.
Third, there are the methods that use the equivalence between optimising the T-count of diagonal CNOT+T circuits and well-studied problems like symmetric 3-tensor factorisation and Reed-Muller decoding~\cite{amy2016t,deBeaudrapN2020treducspidernest,heyfron2018efficient,ruiz2024quantum}. 

While more powerful than the phase-fusing techniques and more efficient than the direct synthesis techniques, the methods in the third group still grow prohibitively costly for large circuits (beyond roughly 50 qubits). This is not too surprising as Reed-Muller decoding and symmetric 3-tensor factorisation are believed to be hard problems. However, to our knowledge no concrete hardness result is known for optimising the T-count of a general Clifford+T circuit. In this paper we demonstrate with a simple argument that this problem is at least NP-hard. This argument turns out to be easily adaptable to prove the hardness of optimising several other types of gates: Toffolis, CNOTs, Hadamards, and in fact any other non-Clifford gate.

\subsection{Statement of results}

We will define the problem T-COUNT as follows: given an integer $k$ and a Clifford+T circuit implementing a unitary $U$, determine whether there exists a Clifford+$T$ circuit implementing $U$ using at most $k$ T gates. Note that the optimisation version of the problem reduces to the T-COUNT problem via a binary search running logarithmically in the length of the circuit. We can then state our main result.
\begin{theorem}
T-COUNT is NP-hard under polynomial-time Turing reductions.
\end{theorem}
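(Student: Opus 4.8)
\emph{Proof proposal.} The plan is to give a polynomial-time Turing reduction from $3$-SAT (the precise flavour of SAT is immaterial). Fix a $3$-CNF formula $\phi$ over variables $x_1,\dots,x_n$ with $m$ clauses, and read it as a Boolean function $\phi\colon\{0,1\}^n\to\{0,1\}$. Associate to $\phi$ the $n$-qubit diagonal unitary
\begin{equation}\label{eq:Uphi-prop}
U_\phi \;=\; \sum_{x\in\{0,1\}^n} e^{i\pi\phi(x)/4}\,\ketbra{x}{x},
\end{equation}
together with a Clifford+$T$ circuit $C_\phi$ implementing it, built by the standard reversible compute--copy--uncompute trick: with a few ancillas, compute $\phi(x)$ reversibly into a fresh qubit $a$ (returning all work ancillas to $\ket{0}$), apply one $T$ to $a$, then undo the computation of $\phi(x)$. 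Since $\phi$ is an $O(m)$-gate AND/OR combination of literals and each Toffoli has a $7$-$T$ realisation, $C_\phi$ uses $O(m)$ $T$ gates and $\mathrm{poly}(n,m)$ other gates, and is produced in polynomial time.

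The heart of the argument is the exact claim: \emph{$U_\phi$ has $T$-count $0$ if and only if $\phi$ is a constant function}. The ``if'' direction is clear, since $\phi\equiv 0$ gives $U_\phi=I$ and $\phi\equiv 1$ gives $U_\phi=e^{i\pi/4}I$, both Clifford. For ``only if'' I would invoke the standard fact that a Clifford circuit acting on ancillas prepared in and returned to $\ket{0}$ implements a Clifford unitary on the data register; hence $T$-count $0$ forces $U_\phi$ to be Clifford (this step, and the next, also go through verbatim when ancillas are disallowed). But the diagonal entries of any diagonal Clifford unitary all lie in a single coset $e^{i\theta}\{1,i,-1,-i\}$, whereas those of $U_\phi$ take only the two values $1$ and $e^{i\pi/4}$; if both occur their ratio is a primitive eighth root of unity, which no such coset contains. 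Hence $U_\phi$ is Clifford precisely when $\phi$ is constant.

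With this in hand the reduction is one oracle call after a trivial test. Given $\phi$, first check by direct evaluation whether the all-zero assignment satisfies $\phi$; if it does, answer ``satisfiable''. Otherwise $\phi$ is not a tautology, so ``$\phi$ constant'' collapses to ``$\phi\equiv 0$'', \ie to ``$\phi$ unsatisfiable''; by the claim this is equivalent to $U_\phi$ having $T$-count $0$. So query the T-COUNT oracle on $(C_\phi,0)$ and report ``satisfiable'' exactly when the answer is ``no''. This decides $3$-SAT with a single oracle query and polynomial overhead, so T-COUNT is NP-hard under polynomial-time Turing reductions. (Equivalently one could pad $\phi$ with a fresh unit clause to kill the tautology case outright; either way the construction really exhibits coNP-hardness, and routing through the complement is precisely what buys the clean threshold $k=0$ and makes Turing, rather than many-one, the natural statement.)

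There is essentially no combinatorial content to get stuck on here; the one place demanding care is making the ``Clifford plus clean ancillas yields a Clifford'' lemma and the treatment of global phase fully rigorous relative to whatever precise definition of $T$-count one adopts, so that the $0$-$T$-count unitaries are exactly the Cliffords and the two constant-$\phi$ edge cases are correctly classified. Once that bookkeeping is fixed, everything else is the elementary eighth-root-of-unity observation together with the routine reversible compilation of $\phi$.
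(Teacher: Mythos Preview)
Your argument is essentially the paper's: reduce SAT by building a diagonal phase gadget from the oracle, observe it is Clifford iff the formula is constant, query T-COUNT at $k=0$, and disambiguate the two constant cases by evaluating $\phi(0,\dots,0)$. The paper certifies non-Cliffordness by conjugating a Pauli $X^{\vec z_1\oplus\vec z_2}$, while you use the equally valid (and slightly more elementary) observation that a diagonal Clifford has all entries in one coset of $\{1,i,-1,-i\}$.

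One small but real difference: you treat the target bit $a$ as a clean ancilla and regard the resulting $n$-qubit $U_\phi$ as the unitary whose T-count is queried. Under the paper's ancilla-free definition of T-COUNT this is not quite well-posed: viewed as a full $(n{+}1)$-qubit unitary your $C_\phi = V(I\otimes T)V$ acts on $\ket{x,a}$ by the phase $e^{i\pi(a\oplus\phi(x))/4}$, so for $\phi\equiv 0$ it is $I_n\otimes T$, which has T-count $1$, not $0$. The paper sidesteps exactly this by keeping the extra qubit as a data wire and appending a single $T^\dagger$ on it, yielding $C_f\ket{\vec x,y}=e^{i\frac{\pi}{4}(1-2y)f(\vec x)}\ket{\vec x,y}$, which is genuinely Clifford (identity or $I\otimes S^\dagger$ up to phase) in both constant cases. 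That one-gate correction is the clean way to discharge the ``Clifford plus clean ancillas'' bookkeeping you flagged; with it your proof and the paper's coincide.
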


Most T gates in many applications are used inside of Toffoli gates $\text{Tof}\ket{x,y,z}\mapsto \ket{x,y,z\oplus (xy)}$, as part of the synthesis of classical functions being applied to quantum states (such as the synthesis of the modular exponentiation needed in Shor's algorithm). An interesting related question is hence to optimise the number of Toffoli gates in such a classical circuit consisting of Toffoli, CNOT and NOT gates. Defining the problem of TOF-COUNT similarly to T-COUNT, we show how a similar argument for the hardness of T-COUNT can also be used to prove the hardness of TOF-COUNT.
\begin{theorem}
TOF-COUNT is NP-hard under polynomial-time Turing reductions.
\end{theorem}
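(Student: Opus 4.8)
The plan is to mirror the argument used for T-COUNT, with the Toffoli gate playing the role of the $T$ gate. The starting point is the Toffoli-count analogue of the fact that a Clifford+$T$ circuit has $T$-count $0$ exactly when it implements a Clifford unitary: a reversible circuit built only from CNOT and NOT gates computes precisely the $\mathbb{F}_2$-affine bijections, and this remains true when ancilla wires (initialised in, and returned to, the state $\ket{0}$) are allowed, since restricting an affine bijection to an affine subspace and then projecting onto a subset of the coordinates again yields an affine map. Hence a reversible function $U$ has Toffoli-count $0$ if and only if $U$ is an $\mathbb{F}_2$-affine bijection; equivalently, deciding TOF-COUNT on instances with $k=0$ amounts to deciding whether a given Toffoli/CNOT/NOT circuit implements an affine bijection.

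From here I would reduce from the unsatisfiability problem for CNF formulas, which is $\text{coNP}$-complete. Given a CNF $\phi$ on variables $x_1,\dots,x_n$, first delete any tautological clause, so that $\phi\not\equiv 1$ while the function computed and the $\text{coNP}$-completeness are unchanged. Add two fresh wires $z_1,z_2$ and construct, in polynomial time, the circuit $C_\phi$ that (i) computes $\phi(x)$ into an ancilla $a$ by the standard reversible evaluation of a CNF, using $O(|\phi|)$ Toffoli, CNOT and NOT gates and a few scratch ancillas, (ii) applies one Toffoli gate with controls $a,z_1$ and target $z_2$, and (iii) uncomputes $a$ by running step (i) in reverse. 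The resulting reversible function is $U_\phi\colon(x,z_1,z_2)\mapsto(x,z_1,\,z_2\oplus\phi(x)\,z_1)$, with every other ancilla returned to $\ket{0}$.

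Correctness is then short. If $\phi$ is unsatisfiable then $\phi\equiv 0$, so $U_\phi$ is the identity, hence affine, hence of Toffoli-count $0$, and $(C_\phi,0)$ is a yes-instance. If $\phi$ is satisfiable then, as $\phi\not\equiv 1$ as well, $\phi$ is non-constant, so its algebraic normal form contains a monomial of degree $\ge 1$; multiplying it by $z_1$ produces a degree-$\ge 2$ monomial of the output component $z_2\oplus\phi(x)z_1$ that no other term cancels, so $U_\phi$ has a non-affine component function and Toffoli-count $\ge 1$, and $(C_\phi,0)$ is a no-instance. Thus $\phi\mapsto(C_\phi,0)$ is a polynomial-time many-one reduction from CNF-unsatisfiability to TOF-COUNT, so TOF-COUNT is $\text{coNP}$-hard. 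For any $L\in\text{NP}$ we then have $\overline L\in\text{coNP}$, hence $\overline L$ many-one reduces to CNF-unsatisfiability and thence to TOF-COUNT; composing with a final negation of the oracle's answer gives a polynomial-time Turing reduction from $L$ to TOF-COUNT, which is therefore $\text{NP}$-hard under polynomial-time Turing reductions.

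The only steps I expect to need genuine care are the ``ancillas do not help'' claim, namely that a CNOT/NOT circuit with work wires still realises only affine bijections on its input wires (so that the threshold $k=0$ really detects non-affineness), and the routine but fiddly check that $C_\phi$ has polynomial size and cleanly restores its ancillas. Everything else is the same bookkeeping as in the T-COUNT proof; the one conceptual substitution is to replace ``non-Clifford diagonal gate'' by ``non-affine permutation'' and to choose $U_\phi$ so that its non-affineness is equivalent to the satisfiability of $\phi$.
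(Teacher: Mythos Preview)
Your proposal is correct and follows essentially the same approach as the paper: both build the reversible map $(\vec x,z_1,z_2)\mapsto(\vec x,z_1,z_2\oplus f(\vec x)z_1)$ via compute/Toffoli/uncompute and observe that it has Toffoli-count $0$ iff $f$ is constant, then finish with a Turing reduction. The only cosmetic difference is that you eliminate the $f\equiv 1$ case up front by stripping tautological clauses (be careful with the edge case where \emph{all} clauses are tautological, leaving the empty CNF) and route through coNP-hardness, whereas the paper keeps both constant cases and distinguishes them afterwards by evaluating $f(0\cdots 0)$.
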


We further show that optimisation of \emph{any} non-Clifford gate is NP-hard. Let $G$ be some non-Clifford gate. For some $\varepsilon>0$ we define the problem of $G$-COUNT$_\varepsilon$ as follows: Given an integer $k$ and a Clifford+$G$ circuit implementing some unitary $U$, determine whether there exists a Clifford+$G$ circuit containing at most $k$ $G$ gates which implements a unitary $U'$ that is within distance $\varepsilon$ of $U$ in the operator norm.
\begin{theorem}
	Let $G$ be any non-Clifford gate and $\varepsilon<\sin(\frac{\pi}{16})\approx 0.195$. Then $G$-COUNT$_\varepsilon$ is NP-hard.
\end{theorem}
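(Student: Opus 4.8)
The plan is to give a polynomial-time Turing reduction from T-COUNT, which is NP-hard by Theorem~1 (or, for the soundness side, to re-run the reduction behind Theorem~1 with an amplified gap). The mechanism is that any non-Clifford $G$ generates, together with the Cliffords, an \emph{approximately universal} gate set, so that $T$ gates and $G$ gates can be translated into one another by the Solovay--Kitaev theorem; the reduction then simply re-expresses a T-COUNT instance using $G$-gates, and the operator-norm slack $\varepsilon$ absorbs the approximation error. The genuinely delicate part is the soundness direction, and that is exactly where the threshold $\varepsilon<\sin(\tfrac{\pi}{16})$ enters.

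First I would establish that Clifford$+G$ is approximately universal for every non-Clifford $G$. For single-qubit $G$ this follows from the fact that the single-qubit Clifford group is a \emph{maximal} finite subgroup of $PU(2)\cong SO(3)$: it is the octahedral group $S_4$, and among the finite rotation groups (cyclic, dihedral, $A_4$, $S_4$, $A_5$) the group $S_4$ sits properly inside none of them, so $\langle\mathrm{Clifford},G\rangle$ is an infinite, hence dense, subgroup; the many-qubit case reduces to this one (or one invokes the known fact that Clifford$+G$ is universal for any non-Clifford $G$). By Solovay--Kitaev one $T$ is then $\delta$-approximable by $O(\mathrm{polylog}(1/\delta))$ gates from Clifford$+G$, and conversely one $G$ by $O(\mathrm{polylog}(1/\delta))$ gates from Clifford$+T$. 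Given a T-COUNT instance $(U,k)$ with $U$ a Clifford$+T$ circuit of size $s$, I output $(C,k')$, where $C$ is obtained by replacing every $T^{\pm1}$ in the circuit for $U$ by a Solovay--Kitaev Clifford$+G$ approximant of precision $\varepsilon_0/s$ --- so $C$ implements some $U_C$ with $\lVert U_C-U\rVert\le\varepsilon_0$ --- and $k'$ is $k$ times the number of $G$-gates needed to approximate one $T$ to a suitable precision $\varepsilon_1$, with $\varepsilon_0+\varepsilon_1<\varepsilon$. \emph{Completeness} is then immediate: an optimal Clifford$+T$ circuit for $U$, using $\le k$ copies of $T^{\pm1}$, becomes under the same replacement a Clifford$+G$ circuit with $\le k'$ gates $G$ lying within $\varepsilon_0+\varepsilon_1<\varepsilon$ of $U_C$.

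The main obstacle is \emph{soundness}: I must rule out that some Clifford$+G$ circuit $W$ with $\le k'$ gates $G$ comes within $\varepsilon$ of $U_C$ while $\mathrm{T\text{-}count}(U)>k$. Translating the $G$-gates of $W$ back into Clifford$+T$ gives a Clifford$+T$ circuit that is only $O(\varepsilon)$-close to $U$ and whose T-count is a fixed polylogarithmic blow-up of $k'$ rather than $\le k$, so the naive round trip loses too much. The fix is to reduce not from plain T-COUNT but from a version in which NO-instances have T-count exceeding that blow-up by a large factor --- obtained by amplifying the (PCP-style) gap of the combinatorial problem driving the proof of Theorem~1, using that T-count is essentially additive over tensor copies --- together with a T-count lower bound for those instances that survives $O(\varepsilon)$-perturbations in operator norm. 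The constant $\sin(\tfrac{\pi}{16})$ is precisely the finest scale at which this robustness must hold: a single $T$ gate equals $R_z(\tfrac{\pi}{4})$ up to global phase, which moves a unitary by $2\sin(\tfrac{\pi}{16})$ in operator norm, so two Clifford$+G$ circuits whose rounded T-counts differ lie more than $2\varepsilon$ apart and the approximate problem faithfully mirrors the exact one exactly when $\varepsilon<\sin(\tfrac{\pi}{16})$. I expect the hard part to be carrying a usable hardness gap through both Solovay--Kitaev translations while keeping every error budget below $\sin(\tfrac{\pi}{16})$.
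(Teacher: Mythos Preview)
Your reduction has a genuine gap in the soundness direction, and the fix you sketch does not close it. You correctly observe that translating a low-$G$-count witness $W$ back into Clifford$+T$ via Solovay--Kitaev blows up the $T$-count by a polylogarithmic factor, so $G\text{-count}\le k'$ does not imply $T\text{-count}\le k$. Your proposed repair---amplifying a ``PCP-style gap'' in the problem behind Theorem~1 and invoking additivity of $T$-count under tensor powers---is not a proof: the reduction behind Theorem~1 is from plain SAT with no inherent gap, additivity of $T$-count under tensoring is itself a nontrivial claim you have not established, and even granting both you would need the gap to beat an \emph{unbounded} polylogarithmic factor (the Solovay--Kitaev overhead depends on the instance size). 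Your explanation of the threshold $\sin(\tfrac{\pi}{16})$ is also off: it is not true that two Clifford$+G$ circuits whose ``rounded $T$-counts'' differ must lie $2\sin(\tfrac{\pi}{16})$ apart in operator norm.

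The paper avoids all of this by reducing directly from SAT and asking only whether the $G$-count is \emph{zero}, i.e.\ whether the circuit is $\varepsilon$-close to a Clifford. For the diagonal $C_f$ of Eq.~(1), one computes explicitly that if $f$ is non-constant then the nearest Clifford (up to global phase) is at distance exactly $\lvert 1-e^{i\pi/8}\rvert=2\sin(\tfrac{\pi}{16})$: the diagonal entries $e^{i\pi/4}$ and $1$ at $(\vec z_1,0)$ and $(\vec z_2,0)$ must both be matched by a global phase times a power of $i$, and the optimal phase $\alpha=\tfrac{\pi}{8}$ splits the difference. One then Solovay--Kitaev-compiles $C_f$ into Clifford$+G$ to within $\varepsilon$, and the triangle inequality gives: $G$-count zero iff $C_f$ is within $2\varepsilon<2\sin(\tfrac{\pi}{16})$ of a Clifford iff $f$ is constant. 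No back-translation, no gap amplification, no $T$-count additivity is needed.
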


We also find an upper bound for the T-COUNT and TOF-COUNT problems. To state this we require the complexity class NQP, standing for \emph{non-deterministic quantum polynomial} time. This class has as a complete problem determining whether two poly-size quantum circuits are \emph{exactly} equal (and hence should be contrasted with the more well-known complexity class QMA which has as a complete problem determining whether two circuits are \emph{approximately} equal).
\begin{theorem}
	The T-COUNT and TOF-COUNT problems are contained in $\text{NP}^{\text{NQP}}$.
\end{theorem}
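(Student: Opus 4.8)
The plan is to decide T-COUNT with a nondeterministic polynomial-time machine equipped with an NQP oracle: the machine guesses a candidate $T$-optimal circuit $C'$ and then uses a single oracle query to check that $C'$ implements the same unitary as the given circuit $C$. For this to be sound we need (a) that whenever the answer is YES there is such a $C'$ of size polynomial in the input, and (b) that deciding whether two poly-size quantum circuits are equal is the NQP-complete problem recalled in the statement (the exact-versus-complement phrasing being immaterial, since $\text{NP}^{\text{NQP}}=\text{NP}^{\text{coNQP}}$). We may assume $k\le |C|$, since otherwise $C$ itself already uses at most $k$ $T$ gates and the answer is trivially YES; hence $k$, the qubit count $n$, and $|C|$ are all polynomially related.

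For (a): reading any Clifford+T circuit left to right and grouping maximal runs of Clifford gates, a circuit with $t$ $T$ gates on $n$ qubits has the form $D_0\,(T\text{ on }q_1)\,D_1\cdots(T\text{ on }q_t)\,D_t$ with each $D_i$ a Clifford unitary and each interleaved gate a single-qubit $T$ (a $T^\dagger$ equals $T$ followed by a Clifford, which we absorb into $D_i$). By a standard canonical form for the Clifford group (Aaronson--Gottesman), each $D_i$ can be implemented by a Clifford circuit of $O(n^2)$ gates; hence if $U$ has \emph{any} Clifford+T implementation with at most $k$ $T$ gates, it has one with at most $O(kn^2)=\mathrm{poly}(|C|)$ gates in total. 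The machine therefore guesses $t\le k$, the qubits $q_1,\dots,q_t$, and $O(n^2)$-gate Clifford circuits $D_0,\dots,D_t$, and assembles $C'$ --- a polynomial-size nondeterministic guess.

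For (b): the machine queries the NQP oracle on the pair $(C,C')$, asking whether $\intf{C}=\intf{C'}$, and accepts iff the answer is yes. That circuit equivalence indeed has this complexity is seen via a Hadamard test: the circuit that prepares $\frac{1}{\sqrt{2^{n}}}\sum_x\ket{x}\ket{x}$ on two $n$-qubit registers, puts a control qubit in $\frac{1}{\sqrt2}(\ket0+\ket1)$, applies $\intf{C_1}$ to the second register if the control is $0$ and $\intf{C_2}$ if it is $1$, then Hadamards and measures the control, accepts with probability $\tfrac{1}{4\cdot 2^{n}}\,\|\intf{C_1}-\intf{C_2}\|_F^2$, which is nonzero exactly when the two unitaries differ. (If one only wants equality up to global phase, test instead whether $\intf{C_1}^\dagger\intf{C_2}$ is a scalar multiple of the identity, an equally serviceable $\mathrm{C}_{=}\mathrm{P}$-type question.) Combined with (a), this places T-COUNT in $\text{NP}^{\text{NQP}}$.

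The argument for TOF-COUNT is the same with ``affine reversible function'' in place of ``Clifford'' and ``Toffoli'' in place of ``$T$'': a circuit over $\{\text{Tof},\text{CNOT},\text{NOT}\}$ with $t$ Toffolis splits as $A_0\,\text{Tof}_{S_1}\,A_1\cdots\text{Tof}_{S_t}\,A_t$ with each $A_i$ an affine map on $\mathbb{F}_2^{n}$ realisable (by Gaussian elimination) with $O(n^2)$ CNOT and NOT gates, so a polynomial-size witness again exists, and its verification is a special case of circuit equivalence (for reversible classical circuits it is in fact even in $\mathrm{coNP}$, an inequality being witnessed by a single input string). The only genuinely load-bearing ingredient is the Clifford/affine normal form: without an a priori bound on the number of Clifford (resp.\ linear) gates needed in an optimal-count implementation, the witness could be arbitrarily long. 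Everything else --- the Hadamard-test circuit and its reduction to the named NQP-complete problem --- is routine; the points to be careful about are whether the problem demands exact equality or equality up to phase, and whether ancillas are permitted in the optimised circuit (if unboundedly many were, the witness-length bound would need revisiting).
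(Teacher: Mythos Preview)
Your proof is correct and follows essentially the same approach as the paper: bound the witness size via the Aaronson--Gottesman $O(n^2)$ Clifford normal form (resp.\ the $O(n^2)$ affine normal form from Gaussian elimination for TOF-COUNT), then invoke an NQP oracle for exact circuit equivalence. You supply a few more details than the paper does --- the explicit Hadamard-test construction and the observation that $\text{NP}^{\text{NQP}}=\text{NP}^{\text{coNQP}}$ --- but the structure is identical.
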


Although non-Clifford gates are the majority of the cost of fault-tolerant implementations, other gates aren't free, and hence we would also like to optimise those. In particular, CNOT gates introduce connectivity constraints that might require expensive routing across distant qubits. Let CNOT-COUNT be defined analogously to T-COUNT, but with respect to CNOT gates (i.e.~count the number of CNOT gates in a Clifford+T circuit).
\begin{theorem}
CNOT-COUNT is NP-hard under polynomial-time Turing reductions.
\end{theorem}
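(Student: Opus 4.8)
The plan is to adapt the reduction used for T-COUNT and TOF-COUNT, now with the CNOT gate in the role of the scarce resource. The observation that makes this work for CNOTs is that the CNOT is the \emph{only} entangling generator available: a Clifford+T circuit that uses no CNOT gate is literally a tensor product of single-qubit circuits, and therefore implements a unitary of the form $W_1 \otimes \cdots \otimes W_m$ with each $W_i$ a single-qubit unitary (and ancillas prepared in $\ket 0$ cannot help, as each would then evolve on its own). Hence, at threshold $k = 0$, the CNOT-COUNT problem on a unitary $U$ is exactly the problem ``is $U$, up to global phase, a tensor product of single-qubit unitaries?'', so it suffices to produce, from a formula, a unitary with exactly this property precisely when the formula is a no-instance.

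Concretely, I would reduce from the coNP-complete problem of deciding that a $3$-CNF formula is unsatisfiable. Given a $3$-CNF formula $\psi$ on variables $x_1,\dots,x_n$, set $\phi := \psi \wedge (y_1 \vee y_2 \vee y_3)$ with $y_1,y_2,y_3$ fresh; this harmless padding guarantees $\phi$ is never the constant-$1$ function, while $\phi$ is unsatisfiable iff $\psi$ is. In polynomial time build a Clifford+T circuit $C_\phi$ for the oracle unitary $U_\phi \colon \ket{x}\ket{a} \mapsto \ket{x}\ket{a \oplus \phi(x)}$, acting on the $n+4$ qubits carrying $x_1,\dots,x_n,y_1,y_2,y_3$ and the target $a$, in the standard way: evaluate $\phi$ into an ancilla using Toffoli, CNOT and NOT gates, apply a CNOT from that ancilla onto the target, then uncompute the ancillas. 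Since each Toffoli expands to a fixed Clifford+T circuit, $C_\phi$ has polynomial size. The reduction outputs the pair $(C_\phi, 0)$.

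Correctness has two directions. If $\psi$ is unsatisfiable then $\phi \equiv 0$, so $U_\phi = I$, which is trivially a tensor product of single-qubit unitaries, i.e.\ CNOT-COUNT$(U_\phi) = 0$. If $\psi$ is satisfiable then $\phi$ is non-constant (it equals $1$ on a satisfying assignment together with $y_1=1$, and $0$ on the all-zero assignment), and then $U_\phi$ is \emph{not} of the form $e^{i\theta}\,W_1 \otimes \cdots \otimes W_{n+4}$: applying such a decomposition to a basis state $\ket{x}\ket{a}$ and comparing the tensor factor on the target qubit, which is $W_{n+4}\ket a$ and hence independent of $x$, with $\ket{a \oplus \phi(x)}$ forces $\phi$ to be constant, a contradiction; hence CNOT-COUNT$(U_\phi) \ge 1$. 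Thus $(C_\phi,0)$ is a yes-instance of CNOT-COUNT iff $\psi$ is unsatisfiable, which is a polynomial-time many-one reduction; CNOT-COUNT is therefore coNP-hard, and since a polynomial-time Turing reduction may negate its output, this yields NP-hardness of CNOT-COUNT under polynomial-time Turing reductions.

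I expect no deep obstacle. The two points needing care are (i) justifying the characterisation ``CNOT-free Clifford+T circuit $\Leftrightarrow$ tensor product of single-qubit unitaries'', which is immediate since the gate set has a unique entangling generator, and (ii) pinning down the edge cases in the completeness/soundness analysis, in particular excluding the constant-$1$ possibility for $\phi$, which the padding clause handles (and which already holds for any proper $3$-CNF, since such a formula can never be a tautology). The one thing this cheap zero-threshold trick does \emph{not} give is any quantitative control over the optimal CNOT-count in the satisfiable case, so a hardness-of-approximation statement, or one with a guaranteed large gap, would require a genuinely different and more quantitative construction; but that is not needed for the theorem as stated.
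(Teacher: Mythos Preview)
Your proposal is correct, and the underlying idea is the same as the paper's: build a poly-size Clifford+$T$ circuit whose CNOT-count is zero iff the underlying Boolean function is constant, using the fact that a CNOT-free circuit over the canonical generators $\{H,T,S,\text{CNOT}\}$ is literally a tensor product of single-qubit unitaries. The specifics differ in two respects worth noting.

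First, the paper reuses the diagonal phase circuit $C_f$ from the T-COUNT argument (with $T$ and $T^\dagger$ sandwiched between two copies of $U_f$) and shows non-separability by writing down four amplitudes on the subspace spanned by $\ket{\vec z_1,0},\ket{\vec z_2,0},\ket{\vec z_1,1},\ket{\vec z_2,1}$. You instead use the bare bit-flip oracle $U_\phi$ and argue non-separability by the one-line observation that the last tensor factor $W_{n+4}\ket a$ cannot depend on $x$ while $\ket{a\oplus\phi(x)}$ does. This is a bit more elementary; the paper's choice has the virtue of recycling the same $C_f$ across all its hardness results.

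Second, you handle the tautology edge case by padding $\psi$ with a fresh clause so that $\phi$ can never be constant-$1$, which turns the reduction into a many-one reduction from UNSAT (then flipped to get NP-hardness under Turing reductions). The paper instead lets $f$ be constant-$1$ land in the ``CNOT-count zero'' bucket and disambiguates afterward by evaluating $f(0\cdots 0)$. Both are valid Turing reductions; your padding trick is a clean way to avoid the post-hoc case split.
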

The argument in fact works for any entangling gate in the Clifford+T gate set, and also works if we let CNOT gates between different qubit pairs carry different weights (as long as all those weights are non-zero).

We could express the Clifford+T gate set succinctly as consisting of CNOT, T and Hadamard gates. We have now seen that optimising the first two types of gates is NP-hard. Optimising Hadamard gates is in fact also hard.
\begin{theorem}
Hadamard-COUNT is NP-hard under polynomial-time Turing reductions.
\end{theorem}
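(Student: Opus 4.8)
The plan is to mimic the reduction used for the earlier hardness results (T-COUNT, CNOT-COUNT), but now arrange things so that the \emph{Hadamard} count of a circuit becomes the quantity that a SAT instance controls. Recall that the general strategy in this paper is to take a Boolean formula $\phi$ (or the associated reversible/classical function), build a Clifford+T circuit $C_\phi$ implementing some unitary $U_\phi$, together with a gadget whose job is to ``diagnose'' whether $\phi$ is satisfiable, and then argue that the minimal count of the relevant gate type in any circuit for $U_\phi$ drops below a threshold $k$ exactly when $\phi$ is satisfiable. For Hadamard-COUNT, the key point to exploit is that Hadamard gates are the \emph{only} mechanism in the Clifford+T gate set for moving between the computational ($Z$) basis and the $X$ basis: any circuit that is entirely CNOT+T (no $H$) is diagonal in the computational basis, and more generally a circuit using few Hadamards can only interleave a bounded number of ``basis changes'' in the Pauli-exponential / phase-polynomial picture. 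So I would set up $U_\phi$ so that it is block-diagonal with a computational-basis-diagonal part plus one ``hard'' subspace action that provably requires either zero or at least some fixed number of Hadamards depending on satisfiability.

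Concretely, the first step is to recall the standard fact that a unitary realizable by a CNOT+T circuit (equivalently, a CNOT+phase circuit with arbitrary $\mathbb{Z}[\frac12,e^{i\pi/4}]$ diagonal phases composed appropriately) is exactly a phase polynomial action $\ket{x}\mapsto e^{i f(x)}\ket{x \oplus L x'}$-type map, i.e.\ something monomial in the computational basis. Then I would encode the SAT instance into a diagonal unitary $D_\phi$ acting on $n+1$ qubits in such a way that $D_\phi$ is a CNOT+T circuit (hence $0$ Hadamards) iff $\phi$ is satisfiable, and otherwise forces $U_\phi = H^{\otimes ?}\, D_\phi' \,H^{\otimes ?}$ or similar conjugated form requiring a positive number of Hadamards; alternatively, run the reduction the other way, making the "default" construction non-diagonal and showing that a satisfying assignment lets you cancel Hadamards. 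I would then choose the threshold $k$ so that ``Hadamard-count $\le k$'' is equivalent to the circuit being (conjugate-to-)diagonal in the right way, and invoke NP-hardness of SAT.

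After the reduction is in place, the remaining steps are routine: check that $C_\phi$ has size polynomial in $|\phi|$ (the phase-polynomial / Toffoli decomposition blows up only polynomially), check that the Turing-reduction bookkeeping (binary search over $k$, as already noted in the excerpt for the optimisation-to-decision step) goes through, and confirm the ``only if'' direction — that \emph{no} clever Clifford+T circuit for $U_\phi$ can beat the Hadamard threshold unless $\phi$ is satisfiable. This last point is where I expect the real work to be. The ``if'' direction (satisfiable $\Rightarrow$ few Hadamards) is just exhibiting a circuit, but the ``only if'' direction requires a genuine lower bound on Hadamard count for an arbitrary circuit implementing a given unitary. The cleanest route is probably a normal-form or invariant argument: associate to $U$ a quantity (e.g.\ the number of distinct ``basis sectors'' appearing in a path-sum / ZX normal form, or the non-diagonality rank of $U$ in a suitable sense) that is monotone under adding non-Hadamard Clifford+T gates and strictly increases with each $H$, so that the Hadamard count is bounded below by this invariant of $U_\phi$, which in turn is large exactly when $\phi$ is unsatisfiable. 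Making such an invariant precise and robust — in particular handling ancillae, which let one "hide" basis changes — is the main obstacle, and I would expect to either restrict to an ancilla-controlled version of the problem first or to argue that ancillae cannot help by a padding/uncomputation argument analogous to the one presumably used for CNOT-COUNT.
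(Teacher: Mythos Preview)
You have the right structural ingredient --- the observation that Hadamard-free Clifford+$T$ circuits implement only \emph{generalized permutations} (monomial matrices: a permutation of computational basis states followed by diagonal phases). But the rest of the plan drifts into unnecessary complications and contains one concrete misstep.

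First, your proposed construction of a \emph{diagonal} $D_\phi$ whose H-count varies with satisfiability cannot work as stated: every diagonal Clifford+$T$ unitary is already realisable with CNOT and $T$ (phase-polynomial synthesis), so its H-count is always zero regardless of $\phi$. You do hedge with ``alternatively, run the reduction the other way,'' which is indeed what is needed, but you never commit to a concrete gadget.

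Second, and more importantly, you are working much too hard on the lower bound. The paper uses threshold $k=0$, which collapses the ``only if'' direction to the one-line fact you already stated: a unitary that is \emph{not} a generalized permutation cannot be written with zero Hadamards. No monotone invariant, no normal form, no ancilla analysis is required --- just exhibit a single computational basis state on which the unitary acts non-monomially. The paper's circuit $C_f$ is the same $U_f$-sandwich as for T-COUNT, with one extra $H$ on the target wire outside the sandwich. If $f$ is unsatisfiable the sandwich is the identity and the stray gates cancel to the identity; if $f(\vec z)=1$ for some $\vec z$, then restricted to the $\ket{\vec z}$ block the target qubit undergoes $HXTXT^\dagger$, which has off-diagonal entries $\frac{i}{\sqrt 2}$ and is therefore not monomial. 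Done: H-count $=0$ iff $f$ is unsatisfiable, and NP-hardness under Turing reductions follows immediately (Turing reductions do not distinguish NP from coNP).

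So the gap is not conceptual but one of calibration: you correctly isolated the ``H-free $=$ monomial'' lever but then set up a reduction with a nontrivial threshold and worried about proving general H-count lower bounds, when the paper simply asks whether the unitary is monomial or not.
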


Our $\text{NP}^{\text{NQP}}$ upper bound also applies to Hadamard-COUNT, but the argument does not extent to CNOT-COUNT. We currently do not know of an upper bound to that problem.

\section{Hardness of circuit optimisation}

We establish NP-hardness by reduction from Boolean satisfiability. Let $f:\{0,1\}^n \to \{0,1\}$ be some Boolean function, given as a Boolean expression. Using standard techniques we can build the classical oracle $U_f$ implementing $U_f\ket{\vec x, y} = \ket{\vec x, y\oplus f(\vec x)}$. Note that $U_f$ is an $(n+1)$-qubit quantum circuit which can be constructed as a poly$(n)$ size Clifford+T circuit (which requires potentially one borrowed ancilla). Consider then the following quantum circuit $C_f$:
\begin{equation}\label{eq:SAT-oracle}
  \tikzfig{SAT-oracle}
\end{equation}
It is straightforward to verify that $C_f$ implements the diagonal operation 
$$C_f\ket{\vec x,y} \ =\  e^{i\frac\pi4(1-2y)f(\vec x)} \ket{\vec x, y}.$$
Now, if $f$ is not satisfiable, then $f(\vec x) = 0$ for all $\vec x$, and hence we see that $C_f = \text{id}$. Additionally, if $f$ is satisfiable for all $\vec x$, then we have
$$C_f\ket{\vec x,y} = e^{i\frac\pi4 (1 - 2y)}\ket{\vec x,y}\ =\ e^{i\frac\pi4} e^{-i\frac\pi2 y} \ket{\vec x,y} \ = \ e^{i\frac\pi4} (I_n\otimes S^\dagger)\ket{\vec x,y}.$$
So, up to global phase, $C_f$ is just an $S^\dagger$ gate in this case, and hence Clifford. In either case $C_f$ is Clifford, so that the minimal T-count of $C_f$ is zero. 

Now suppose $f$ is satisfiable, but that not every input is a solution. Then there exist $\vec z_1$ and $\vec z_2$ such that $f(\vec z_1) = 1$ and $f(\vec z_2) = 0$. Then it is easy to see that
$$C_f\ket{\vec z_1, 0} \ =\  e^{i\frac\pi4} \ket{\vec z_1,0} \qquad \text{and} \qquad C_f\ket{\vec z_2, 0} = \ket{\vec z_2, 0}.$$

We can now observe that $C_f$ is non-Clifford by considering the action of $C_f$ on the $n$-qubit Pauli $X^{\vec z_1 \oplus \vec z_2}:= X^{(\vec z_1 \oplus \vec z_2)_1}\otimes \cdots \otimes X^{(\vec z_1 \oplus \vec z_2)_n}$. In particular,
$$
C_f^\dagger X^{\vec z_1 \oplus \vec z_2} C_f \ket{\vec z_1, 0} = e^{i\frac\pi4}C_f^\dagger X^{\vec z_1 \oplus \vec z_2}\ket{\vec z_1, 0} = e^{i\frac\pi4}C_f^\dagger \ket{\vec z_2, 0} = e^{i\frac\pi4}\ket{\vec z_2, 0},$$
and hence $C_f^\dagger X^{\vec z_1 \oplus \vec z_2} C_f$ is not a member of the $n$-qubit Pauli group. By definition $C_f$ is non-Clifford and so its minimal T-count over Clifford+$T$ is necessarily greater than 0.

To complete the reduction, given a Boolean expression $f$ build $C_f$ as above in poly time and determine whether a $T$-count 0 implementation exists. If the minimal $T$-count is greater than 0, $f$ is non-constant and hence satisfiable. If instead the minimal $T$-count is $0$, then either $f$ is not satisfiable or it is always satisfiable. We can distinguish between these two cases by evaluating $f(0\cdots 0)$. If $f(0 \cdots 0) = 1$, then $f$ is satisfiable, and otherwise we conclude that it must not be satisfiable.

Note that the exact value of $T=Z(\frac\pi4)$ is not that special. The argument continues to hold for any $Z(\alpha)$, as long as the resulting Clifford+$Z(\alpha)$ gate set allows you to construct the $U_f$ classical oracle. Hence, this argument also shows hardness of optimising the number of $Z(\frac{\pi}{2^n})$ for $n\geq 2$. In Section~\ref{sec:general-hardness} we show how to modify the argument to prove hardness of optimising \emph{any} non-Clifford gate.

In addition, we only had to distinguish here between a T-count of zero, and a non-zero T-count. This means that similar arguments would also hold for modified cost functions. We could for instance consider the decision problem of T-DEPTH, that asks whether a given circuit has an implementation that requires at most $k$ layers of parallel T gates. Of course any Clifford circuit will have a T-depth of zero, while any non-Clifford circuit will have a T-depth of at least one. Hence, the same argument as above shows that T-DEPTH is also NP-hard.
We can also consider the problem of IS-CLIFFORD, which asks whether the given Clifford+$T$ circuit implements a Clifford unitary, i.e.~whether it's T-count is zero. We then also see that IS-CLIFFORD is NP-hard.

\subsection{Hardness of Toffoli-count optimisation}

We can use a similar argument to the one above to show that the problem of TOFFOLI-COUNT, determining the minimal number of Toffoli gates needed to write down a classical reversible circuit (i.e.~a quantum circuit consisting of NOT, CNOT and Toffoli gates), is also NP-hard. We then replace the $C_f$ of Eq.~\eqref{eq:SAT-oracle} by the following:
\begin{equation}\label{eq:SAT-oracle-tof}
  \tikzfig{SAT-oracle-tof}
\end{equation}
Again if $f$ is not satisfiable, $C_f$ implements the identity, and if it is always satisfiable, then it implements a CNOT on the bottom two qubits. In both cases the Toffoli-count is zero. Otherwise if $f(\vec z_1) = 1$ and $f(\vec z_2) = 0$ we can check that $C_f^\dagger X^{\vec z_1\oplus \vec z_2} C_f\ket{\vec z_1,x,y,z} = \ket{\vec z_2, x,y, y\oplus z}$, so that $C_f$ is not Clifford, and hence its Toffoli count is not zero.

As above, this also shows that the problem TOF-DEPTH is NP-hard. We can also consider the problem IS-LINEAR, which asks whether that classical reversible circuit implements a linear Boolean function (one which can be implemented solely using XOR and NOT operations), and we then see that this is also NP-hard.

\subsection{Hardness of CNOT-count optimisation}

We can reuse the argument for NP-hardness of T-COUNT to prove the hardness of optimising the number of entangling gates in a Clifford+$T$ circuit. In the case when $CNOT$ is the only multi-qubit gate over a basis of the Clifford+$T$ operators, as in the canonical generators $\{H, T, CNOT, S:=T^2\}$, this implies the hardness of $CNOT$-count optimization. Likewise, if $CZ$ is used in place of the $CNOT$ gate as the single entangling gate, this implies optimization of $CZ$-count is again NP-hard.

We define the problem of ENT-COUNT analogously to T-COUNT, where we replace the role of the $T$ gate by any multi-qubit (entangling) Clifford+$T$ gate $U$, for instance $CNOT$, $CZ$, or an entangling product of Clifford and $T$ gates. We require that $U$ is a Clifford+$T$ operator in order to guarantee that the circuit $C_f$ admits implementation over single-qubit Clifford+$T$ and $U$. We recall from above that for $C_f$ in Eq.~\eqref{eq:SAT-oracle}, if $f$ is not satisfiable, $C_f = \text{id}$, and hence it has ENT-COUNT zero. If instead $f$ is always satisfiable, $C_f = e^{i\frac\pi4}(I_n\otimes S^\dagger)$ so that its ENT-COUNT is also zero. Now suppose $f$ is non-constant and pick $\vec z_1$ and $\vec z_2$ such that $f(\vec z_1) = 1$ and $f(\vec z_2) = 0$. We can observe that:
\begin{align*}
	C_f\ket{\vec z_1, 0} \ &= \ e^{i\frac\pi4}\ket{\vec z_1,0} \\
	C_f\ket{\vec z_2, 0} \ &= \ \ket{\vec z_2,0}\\
	C_f\ket{\vec z_1, 1} \ &= \ e^{-i\frac\pi4}\ket{\vec z_2,1}\\
	C_f\ket{\vec z_2, 1} \ &= \ \ket{\vec z_2,1}
\end{align*}
It can be observed that the right-hand side above is non-separable, and in particular $C_f$ is entangling on this $4$-dimensional subspace. Since $C_f$ is itself entangling, it can't be written as a product of non-entangling operators, and so at least one $U$ gate is necessary. Hence, the ability to determine the optimal $U$ count of $C_f$ allows us to determine whether $f$ is satisfiable.

As with the situation for T-COUNT, the same argument can be used to argue that ENT-DEPTH is NP-hard. In fact, the same argument works for any cost model where zero entangling gates has a cost of zero, while any non-zero amount of entangling gates has a non-zero cost, and hence this argument also applies to a setting where we are only allowed to put entangling gates in certain locations, or different locations occur varying (non-zero) costs.

\subsection{Hardness of Hadamard-count optimization}

Modifying the $T$-count optimization hardness argument in yet another different way also suffices to prove hardness for the H-COUNT optimisation problem --- determining the minimal number of hadamard gates needed to implement a circuit over the canonical generators $\{H, T, CNOT, S\}$ of the Clifford+$T$ gate set. In particular, by conjugating the target bit with hadamard gates as below, it can be observed that $C_f$ can be implemented with $H$-count zero if and only if $f$ is unsatisfiable.

\begin{equation}\label{eq:SAT-oracle-H}
  \tikzfig{SAT-oracle-H}
\end{equation}

In particular, if $f$ is unsatisfiable, then $C_f$ is the identity. If however $f$ is satisfiable, then there exists at least one bit string $\vec z$ such that $f(\vec z) = 1$, and so where the control register is in the state $\ket{\vec z}$, $C_f$ implements the following transformation on the target bit:
\[
	HXTXT^\dagger = \begin{bmatrix} \frac{1}{\sqrt{2}} & \frac{i}{\sqrt{2}} \\ \frac{i}{\sqrt{2}} & \frac{1}{\sqrt{2}} \end{bmatrix}
\]
which can not be implemented over $\{H, T, CNOT, S\}$ without at least one hadamard gate (and hence neither can its controlled version), as the hadamard-free circuits over the canonical basis correspond to generalized permutations (permutations of computational basis states together with added phases on these basis states). More generally, optimising the $H$-count is hard over a gate set $\mathcal{G}$ whenever it generates the Clifford+$T$ circuits, and $\mathcal{G}\setminus \{H\}$ contains only $Z$- and $X-$basis transformations.

The fact that the H-COUNT optimisation problem is NP-hard is not surprising given its close connection and applications to $T$-count optimisation. In particular, the number of distinct $n$-qubit unitaries over Clifford+$T$ with a fixed number of hadamard gates is finite \cite{amy2017} and the minimal number of $T$-gates in an $n$-qubit Clifford+$T$ circuit with $k$ hadamard gates is bounded above by both $O(k\cdot n)$ and $O((n+k)^2)$ \cite{amy2016t}. Optimisation of the $H$-count over Clifford+$T$ hence has significant application to the problem of $T$-count optimisation.

\section{Hardness of general approximate circuit optimisation}\label{sec:general-hardness}
We can adapt the above arguments to show hardness of optimising the number of any specific non-Clifford gate.

\begin{definition}
	Let $G$ be any unitary quantum gate. We define the decision problem $G$-COUNT$_\varepsilon$ as follows: given the inputs
	\begin{itemize}
		\item $C$, a circuit specified in the Clifford+$G$ gate set;
		\item $k$, an integer;
		\item $\varepsilon$; an error bound in $\mathbb{R}_{>0}$;
	\end{itemize}
	determine whether there exists a circuit $C'$ over the inverse-closed Clifford+$G$ gate set using at most $k$ $G$ or $G^\dagger$ gates, such that for some global phase $\alpha$ we have $\norm{C-e^{i\alpha}C'}_{\infty} \leq \epsilon$.
\end{definition}

Note that $G$-COUNT$_\varepsilon$ is asking whether there exists any circuit close to the target circuit with some $G$-count. It hence allows us to determine the exact optimal $G$-count of circuits close to the target. This is different then asking to approximate the $G$-count itself.

\begin{theorem}
	For any non-Clifford gate $G$, $G$-COUNT$_\varepsilon$ is NP-hard under polynomial-time Turing reductions for error bounds $\varepsilon<\sin(\frac{\pi}{16})\approx 0.195$.
\end{theorem}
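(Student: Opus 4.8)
The plan is to carry out the same reduction from Boolean satisfiability, but to build a Clifford+$G$ circuit that \emph{approximates} the phase oracle $C_f$ of Eq.~\eqref{eq:SAT-oracle} to very high precision. The only non-elementary input is the standard fact that adjoining any non-Clifford gate to the Clifford group gives an approximately universal gate set, so that by the Solovay--Kitaev theorem there is, for every $\delta>0$, a Clifford+$G$ circuit within $\delta$ in operator norm of the $T$ gate, of size $\mathrm{polylog}(1/\delta)$. Given a satisfiability instance $f$ on $n$ variables, I would first build $C_f$ as a $\mathrm{poly}(n)$-size Clifford+$T$ circuit (using one borrowed ancilla), then replace each of its $m=\mathrm{poly}(n)$ many $T$ and $T^\dagger$ gates by a $(\delta/m)$-approximation, obtaining a Clifford+$G$ circuit $\hat C_f$ with $\norm{\hat C_f-C_f}_\infty\le\delta$; for a sufficiently small $\delta$ (depending only on $\varepsilon$) this is a polynomial-time construction. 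I would then query the $G$-COUNT$_\varepsilon$ oracle on $(\hat C_f,0,\varepsilon)$ and separately evaluate $f(0\cdots0)$. As in the $T$-count argument, if $f$ is unsatisfiable or always true then $C_f$ equals $\mathrm{id}$, respectively $e^{i\pi/4}(I_n\otimes S^\dagger)$, both Clifford, so $\hat C_f$ lies within $\delta\le\varepsilon$ of a Clifford and the oracle answers ``yes'', and evaluating $f(0\cdots0)$ separates the two cases. It remains to show the oracle answers ``no'' when $f$ is satisfiable but non-constant.

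For that, fix $\vec z_1,\vec z_2$ with $f(\vec z_1)=1$ and $f(\vec z_2)=0$ and set $\mathcal H:=\mathrm{span}\{\ket{\vec z_1,0},\ket{\vec z_2,0}\}$, a $2$-dimensional stabilizer code on which $C_f$ acts, in that ordered basis, as $\mathrm{diag}(e^{i\pi/4},1)$; a one-line computation shows that the distance from $\mathrm{diag}(e^{i\pi/4},1)$ to the single-qubit Clifford group, minimised over global phase, is exactly $2\sin(\frac{\pi}{16})$. Now suppose $\norm{\hat C_f-e^{i\alpha}V}_\infty\le\varepsilon$ for some Clifford $V$ and phase $\alpha$. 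Compressing both sides by $\Pi_{\mathcal H}$ and using $\norm{\hat C_f-C_f}_\infty\le\delta$ gives $\norm{\mathrm{diag}(e^{i\pi/4},1)-e^{i\alpha}\Pi_{\mathcal H}V\Pi_{\mathcal H}}_\infty\le\varepsilon+\delta$, so the $2\times2$ contraction $\Pi_{\mathcal H}V\Pi_{\mathcal H}$ on $\mathcal H$ has smallest singular value at least $1-(\varepsilon+\delta)$. That singular value is the smallest cosine of a principal angle between $\mathcal H$ and $V\mathcal H$, and for two stabilizer codes these cosines are quantised into $\{0\}\cup\{2^{-k/2}\}$; since $\varepsilon+\delta<1-\frac{1}{\sqrt2}$ for $\varepsilon<\sin(\frac{\pi}{16})$ and $\delta$ small, it cannot be $\frac{1}{\sqrt2}$ or smaller, hence equals $1$ and $V\mathcal H=\mathcal H$. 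But a Clifford preserving a stabilizer code acts on it as a logical Clifford, \ie here as a single-qubit Clifford up to phase, so $\mathrm{diag}(e^{i\pi/4},1)$ would lie within $\varepsilon+\delta$ of a single-qubit Clifford up to phase, whereas that distance is at least $2\sin(\frac{\pi}{16})>\varepsilon+\delta$ --- a contradiction. Thus no Clifford circuit ($k=0$) comes within $\varepsilon$ of $\hat C_f$, the oracle answers ``no'', and the construction together with the test $f(0\cdots0)$ decides SAT in polynomial time, establishing NP-hardness under polynomial-time Turing reductions.

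The step I expect to be the main obstacle is really just the universality input --- that the Clifford group together with an \emph{arbitrary}, possibly multi-qubit, non-Clifford gate is approximately universal --- together with the routine check that Solovay--Kitaev compilation of the $\mathrm{poly}(n)$ many $T$ gates to precision $\delta/\mathrm{poly}(n)$ stays polynomial-time (it does, since only a constant final precision $\delta$ is needed). The geometric core --- that a nearly-isometric compression of a Clifford to a small stabilizer code must in fact be an isometry onto the code, forcing the Clifford to act there as a logical Clifford --- is elementary given the quantisation of stabilizer-code overlaps; a more careful version of this bookkeeping would in fact raise the threshold to $\varepsilon<1-\frac{1}{\sqrt2}$, comfortably above $\sin(\frac{\pi}{16})$.
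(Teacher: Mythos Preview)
Your reduction is correct and follows the same high-level template as the paper --- build $C_f$, compile it to Clifford+$G$ via Solovay--Kitaev, and show that a Clifford can approximate the result only when $f$ is constant --- but your core lemma is genuinely different. The paper lower-bounds $\min_{\alpha,U\text{ Clifford}}\norm{C_f-e^{i\alpha}U}_\infty$ by a direct case split: if $U$ is non-diagonal it appeals to the overlap quantisation of stabiliser \emph{states} to get distance $\geq\tfrac12$, and if $U$ is diagonal it plugs in the two witnesses $\vec z_1,\vec z_2$ and optimises the phase by hand to obtain $2\sin(\pi/16)$. You instead compress to the $2$-dimensional stabiliser code $\mathcal H=\mathrm{span}\{\ket{\vec z_1,0},\ket{\vec z_2,0}\}$, use the quantisation of principal angles between stabiliser \emph{subspaces} to force $V\mathcal H=\mathcal H$, and then invoke that a Clifford fixing a code acts as a logical Clifford, reducing to the single-qubit distance computation. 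Both arguments land on the same number $2\sin(\pi/16)$, but yours is more structural and --- because you decouple the Solovay--Kitaev error $\delta$ from the oracle tolerance $\varepsilon$ --- directly yields the larger threshold $\varepsilon<1-\tfrac{1}{\sqrt2}$ you note at the end. The paper's more elementary route, if one likewise decoupled $\delta$, would in fact push the threshold further to $\varepsilon<2\sin(\pi/16)$, so the trade-off is: your argument is cleaner and more conceptual, theirs is more elementary and (once sharpened) slightly stronger. The one step you might want to support with a reference or a short proof is the principal-angle quantisation for stabiliser codes (equivalently, that $\Pi_{\mathcal H_1}\Pi_{\mathcal H_2}\Pi_{\mathcal H_1}$ is always a scalar multiple of a projector); it is true and follows from the commutation structure of the two stabiliser groups, but it is less widely cited than the single-state overlap fact the paper uses.
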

\begin{proof}
	We modify the argument we used above: reducing from Boolean satisfiability, by asking about the approximate $G$-count of the circuit $C_f$ in Eq.~\eqref{eq:SAT-oracle}. Note that as a circuit over Clifford+$T$, $C_f$ might not be exactly expressible as a circuit over Clifford+$G$. Suppose first that this is the case --- for instance, if $G=\sqrt{T}$. 
	We show that if $f$ is non-constant, then at least one $G$ gate is required to approximate $C_f$ to distance $\epsilon$ over the Clifford+$G$ gate set. In particular, assume that $f(\vec z_1) = 1$ and $f(\vec z_2) = 0$ for some vectors $\vec z_1$ and $\vec z_2$, and let $U$ be a Clifford. We will find a lower bound on $\norm{C_f - e^{i\alpha} U}_\infty$. 

	First we note that any global phase multiple of a non-diagonal Clifford has distance at least $\frac12$ from $C_f$. In particular, assume $U$ is not diagonal and let $\ket{\vec x,y}$ be some computational basis state such that $U\ket{\vec x,y}\not\propto \ket{\vec x,y}$. Since $U\ket{\vec x,y}$ is a stabiliser state, $\norm{\ket{\vec{x},y} - U\ket{\vec{x},y}}_2 \geq \frac12$ \cite{gmc17}, and in particular since $C_f$ is diagonal,
\[
	\norm{C_f - e^{i\alpha} U}_\infty \geq \norm{C_f\ket{\vec x,y} - e^{i\alpha}U\ket{\vec x,y}}_2 \geq \frac12.
\]

	Now suppose $U$ is diagonal. Then $U$ has the form $U\ket{\vec x,y} = e^{i\phi(\vec x,y)} \ket{\vec x,y}$ where $\phi$ is some quadratic function of $\vec x$ and $y$ taking values in $\frac\pi2\mathbb{Z}$. Then for any $\vec x$ and $y$:
	\[ \norm{C_f - e^{i\alpha} U}_\infty \ \geq \ \norm{C_f\ket{\vec x,y} - e^{i\alpha} U\ket{\vec x,y}}_2 \ = \  \lvert e^{i\frac\pi4 (1-2y)f(\vec x)} - e^{i\alpha}e^{i\phi(\vec x,y)} \rvert.\]
	Plugging in $\vec x = \vec z_1$ and $\vec x = \vec z_2$, we then find that for $U$ to be an $\varepsilon$-approximation of $C_f$, we must at least satisfy:
	\[\lvert e^{i\frac\pi4}e^{-i\frac\pi2y} - e^{i\alpha}e^{i\phi(\vec z_1,y)} \rvert \ \leq \ \varepsilon \qquad \text{and} \qquad \lvert 1 - e^{i\alpha}e^{i\phi(\vec z_2,y)} \rvert \ \leq \ \varepsilon.\]
	Assuming without loss of generality that $0\leq \alpha\leq \frac\pi4$ (since if it is outside this bound, a $\frac\pi2$ phase can be extracted into the Clifford circuit itself), we see that to minimise the value of both of these terms, we need to have $\phi(\vec z_2,y) = 0$, $\phi(\vec z_1,y) = -\frac\pi2 y$ and $\alpha=\frac\pi8$. Any other value of $\phi$ would just increase the value of the expression, while changing $\alpha$ would decrease one at the cost of the other. Hence, the closest a global phase multiple of a Clifford can get to approximating $C_f$ is at least $\lvert 1 - e^{i\frac\pi8} \rvert  = 2\sin(\frac{\pi}{16})\approx 0.39$. 

	However, all of this is assuming that $C_f$ can be exactly implemented as a Clifford+$G$ circuit, so that it can be given as an input to $G$-COUNT$_\varepsilon$. This is not the case in general.
	Note though that if $G$ is non-Clifford that Clifford+$G$ will be approximately universal, and hence that it can approximate any unitary. In particular, due to the Solovay-Kitaev theorem it can approximate any other finite gate set with polylogarithmic overhead in the error. 
	We can then translate every $T$ gate in $C_f$ into a circuit over the Clifford+$G$ gate set, giving a circuit $C'_f$ that is within $\varepsilon$ of $C_f$. Since $G$ and $T$ are fixed, and there are a polynomial number of them, this takes polynomial time in $\log 1/\varepsilon$. Now, if $f$ is always satisfiable or not satisfiable, then $C_f$ is Clifford, so that $C'_f$ is within $\varepsilon$ to being a Clifford. Since we only have to estimate up to $\varepsilon$, a Clifford circuit will do, and hence the $G$-count is zero. 
	Conversely, suppose the $G$-count of $C'_f$ is zero, so that it is within $\varepsilon$ of some Clifford. Using the triangle inequality, $C_f$ must then be within $2\varepsilon$ of some Clifford. However, by assumption $2\varepsilon<2\sin(\frac{\pi}{16}) = \lvert 1 - e^{i\frac\pi8} \rvert$ so this is only possible if $C_f$ is Clifford itself by the argument above, so that $f$ must be not satisfiable or everywhere satisfiable. Hence, if the $G$-count of $C'_f$ is greater than zero, then $f$ is satisfiable.

	The full reduction is then as follows: for a Boolean formula, construct $C_f$, written in the Clifford+$T$ gate set. Using the Solovay-Kitaev algorithm approximate the $T$ gate using Clifford+$G$ closely enough so that we get an approximation of $C_f$ that is within $\varepsilon$ in the operator norm. Ask whether $G$-COUNT$_\varepsilon(C'_f,0)$ for some $\varepsilon < \sin(\frac{\pi}{16})$ is true. If it is, then the $G$-count is zero, and hence the circuit is well-approximated by a Clifford, which is only possible if $f$ is not satisfiable, or if $f$ is always satisfiable. Test $f(0\cdots 0)$ to see whether it is indeed always satisfiable. If not, then it is not satisfiable. If $G$-COUNT$_\varepsilon(C'_f,0)$ is false, then any approximation to $C'_f$ and hence to $C_f$ can only be a non-Clifford circuit. Hence, $f$ must be satisfiable.
\end{proof}

The only thing we needed for this argument to work, was for the cost function (in this case: the number of $G$ gates) to distinguish between Clifford unitaries, and non-Clifford unitaries. Hence, we can generalise the above problem to gate sets containing multiple non-Clifford gates $G_1,\ldots, G_m$, and any cost function 
$$f:\{\text{circuits over Clifford+}\{G_1,\ldots, G_m\}\,\}\to \R_{\geq 0}$$
as long as $f(C) = 0$ means $C$ implements a Clifford unitary, and $f(C)>0$ means $C$ implements a non-Clifford unitary. Hence, determining the optimal $G$-depth, instead of $G$-count, will also be NP-hard.

\section{An upper bound to the hardness of T-count and Toffoli optimisation}

The arguments above show that T-COUNT and its associated optimisation problem are at least NP-hard (and the same for TOFFOLI-COUNT). Let us also demonstrate a simple upper bound to the T-COUNT problem. 
\begin{proposition}
	The T-COUNT problem is contained in $\text{NP}^{\text{NQP}}$.
\end{proposition}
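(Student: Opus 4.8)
The plan is a textbook guess-and-check, with the ``check'' delegated to an NQP oracle. On input $(C,k)$, where $C$ is a Clifford$+T$ circuit for $U$ on $n$ qubits, the $\text{NP}^{\text{NQP}}$ machine nondeterministically guesses a candidate circuit $C'$ using at most $k$ $T$ gates, then queries an oracle for the NQP-complete problem of exact circuit equivalence to verify that $C'$ implements $U$, and accepts iff that check passes. Two things must be argued: that a correct guess can always be taken of polynomial size (so that it is a legal $\text{NP}$ witness), and that the verification is a single query to an NQP oracle.

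For the witness size: suppose $U$ has a Clifford$+T$ implementation with $t\le k$ $T$ gates, on some register of $N$ qubits (the $n$ qubits of $U$ together with at most $\mathrm{poly}(n,k)$ ancillas; see the remark on obstacles below). Grouping the Clifford gates lying between consecutive $T$ gates, this circuit has the layered form $D_t\,(T\text{ on }a_t)\,D_{t-1}\cdots(T\text{ on }a_1)\,D_0$, with each $D_i$ an $N$-qubit Clifford operator and each $a_i\in\{1,\dots,N\}$. An $N$-qubit Clifford is pinned down by its conjugation action on the Pauli group -- a symplectic tableau plus a phase vector, hence $O(N^2)$ bits -- and, by the Aaronson--Gottesman synthesis algorithm, can be re-expanded into an explicit $O(N^2)$-gate Clifford circuit in polynomial time. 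Thus the guessed object is the list $(D_0,a_1,D_1,\dots,a_t,D_t)$ of total bit-length $O((k{+}1)N^2)$, polynomial in the input, from which the corresponding circuit is reconstructible in polynomial time; conversely any guessed polynomial-size circuit of $T$-count $\le k$ that passes verification certifies a YES instance.

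For the verification: after guessing $C'$ the machine must decide whether $C'$ and $C$ implement the same unitary (if ``implements $U$'' is meant only up to global phase, note that the scalar of a Clifford$+T$ circuit may be normalised to an $8$th root of unity, so it suffices to run the check below for each of the finitely many candidate phases). Padding the smaller circuit with clean ancillas so both act on a common register, let $W$ be the $2^n\times 2^n$ operator that a YES answer requires to equal $I$ -- namely $(\bra{0^m}\otimes C^\dagger)\,C'\,(\ket{0^m}\otimes I)$ if $C'$ uses $m$ extra ancillas, or simply $C^\dagger C'$ if not. Since $W$ is a contraction, $W=I$ is equivalent to $\tfrac{1}{2^n}\tr W = 1$; and $\tfrac1{2^n}\tr W$ equals the amplitude $\langle\Phi|(W\otimes I)|\Phi\rangle$, where $\ket\Phi$ is the maximally entangled state, which is itself a single amplitude of a polynomial-size Clifford$+T$ circuit (prepare $\ket\Phi$ with Hadamards and CNOTs, apply $W\otimes I$, unprepare). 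Deciding whether such an amplitude attains a prescribed value is precisely the exact-equivalence problem stated above to be NQP-complete, or its complement -- and $\text{NP}^{\text{NQP}}=\text{NP}^{\text{coNQP}}$ since an $\text{NP}$ machine may freely negate oracle answers -- so a single oracle call finishes the verification, and combining the two parts gives an $\text{NP}^{\text{NQP}}$ procedure for T-COUNT.

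The step I expect to be the main obstacle is bounding the witness size, and specifically justifying that an optimal implementation need only use polynomially many ancilla qubits (or, equivalently, committing to a formulation of T-COUNT in which the qubit count is part of the instance); once the register size is polynomial, the Clifford normal form plus Aaronson--Gottesman expansion makes the witness polynomial, and the NQP oracle does exactly the task -- exact comparison of two poly-size quantum circuits -- that the class was introduced to capture. The identical scheme, with ``Clifford'' replaced by ``linear reversible (CNOT and NOT)'' circuits and the amplitude check replaced by its (easier, classical) analogue, yields the same bound for TOF-COUNT.
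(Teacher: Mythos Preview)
Your proposal is correct and follows essentially the same approach as the paper: nondeterministically guess a layered circuit $D_t\,T_{a_t}\,D_{t-1}\cdots T_{a_1}\,D_0$ with each $D_i$ an $O(n^2)$-gate Clifford normal form (Aaronson--Gottesman), then check equality with $C$ via the oracle, using that exact circuit equivalence is $\text{coNQP}$-complete and that $\text{NP}^{\text{NQP}}=\text{NP}^{\text{coNQP}}$.

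Two minor remarks. First, the paper's argument is terser than yours: it does not spell out the maximally-entangled-state amplitude reduction you give for the verification, simply citing that exact equivalence is $\text{coNQP}$-complete; your extra detail is sound and makes the oracle call concrete. Second, the ancilla issue you flag as the main obstacle is sidestepped in the paper by its formulation of T-COUNT: the candidate circuit is taken to act on the same $n$ qubits as the input $C$, so the register size $N$ is fixed by the instance and the witness bound is immediately $O(kn^2)$. If one instead allowed unbounded ancillas in the definition, your concern would be genuine and the paper does not address it.
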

We first recall that determining whether two poly-size quantum circuits are exactly equal is a coNQP-complete problem~\cite{tanaka2010exact} (non-deterministic quantum polynomial time). Note that a QMA oracle~\cite{bookatz2012qmacomplete} is not enough since we care about exact equality. Now to determine whether a given $n$-qubit circuit $C$ has an implementation with at most $k$ T gates, we realise first that such a circuit can be made to have at most $O(n^2k)$ Clifford+$T$ gates: any pure Clifford circuit can be represented by a normal form consisting of $O(n^2)$ gates~\cite{aaronsongottesman2004}, and hence a general Clifford+$T$ circuit containing $k$ $T$ gates can be written as a series of Clifford normal forms followed by a $T$ gate, with this structure repeated $k$ times (for the specific case of Clifford+$T$ this can actually be improved to $O(nk+n^2)$ by the use of Pauli exponentials~\cite{Litinski2019gameofsurfacecodes}). Hence, we can non-deterministically choose any circuit with up to $k$ $T$ gates in non-deterministic poly-time, and then use an NQP oracle to determine whether this circuit is equal to $C$. Hence T-COUNT is in $\text{NP}^{\text{NQP}}$.

Note that classical Boolean circuit minimisation is complete for $\Sigma_2^P := \text{NP}^{\text{NP}}$ \cite{buchfuhrer2011}, so that the only difference with this bound is that we replace the coNP problem of determining whether Boolean circuits are equal, with the coNQP problem of doing the same for quantum circuits.

The argument above works for determining an upper bound of exact optimisation of Clifford+$G$ circuits for any non-Clifford $G$, which hence includes TOF-COUNT.
To establish the same upper bound for H-COUNT, we need a slightly different argument. There we can use the fact that for a fixed number of qubits $n$, there is only a finite number of Hadamard-free unitaries over $\{\text{CNOT},S,T\}$~\cite{amy2017,amy2016t}, and in particular we can build any such circuit with $O(n^3)$ gates, which are easy to enumerate (a $O(n^2)$ circuit suffices~\cite{amy2016t}, but allowing $O(n^3)$ gates makes them easier to enumerate, albeit with some redundancy). We can then non-deterministically enumerate all $O(kn^3)$ Clifford+T circuits that have at most $k$ Hadamard gates. Hence, H-COUNT is in $\text{NP}^{\text{NQP}}$.

We cannot use a similar argument for bounding the hardness of CNOT-count, since there are infinitely many single-qubit unitaries over $\{H,T\}$, so it is not clear how we could enumerate all candidate circuits.

For $G$-COUNT$_\varepsilon$ one might expect that we would require instead $\text{NP}^{\text{QMA}}$, as QMA allows one to check whether two quantum circuits are approximately equal. However, this only works as a promise problem where either the circuits have to be closer than a certain bound $\varepsilon$ \emph{or} more different then some other bound $\varepsilon'$ and these bounds need to be `far apart'. However, in this case where we are generating candidate circuits non-deterministically, we have no such promise. Instead, we wish to solve the non-promise problem of whether given circuits $U$ and $V$ are $\varepsilon$-close. The contravariant form is then determining whether there exists a normalised state $\ket{\psi}$ and global phase $\alpha$ such that $\lVert (U-e^{i\alpha}V)\ket{\psi}\rVert_2>\varepsilon$. If $\ket{\psi}$ is efficiently representable as a tensor, this calculation of the norm can be represented as a tensor contraction, and hence is in $P^{\#P}$. 
The overall problem would then be in $\text{NP}^{\#P}$, since we are non-deterministically choosing a candidate state $\ket{\psi}$. An upper bound to the hardness of $G$-COUNT$_\varepsilon$ would then be $\text{NP}^{\text{NP}^{\#P}}$.
However, we do not have such a promise that $\ket{\psi}$ is efficiently preparable. Whether $\ket{\psi}$ can always be chosen in such a manner is in fact exactly the question of whether QMA = QCMA~\cite{wocjan2003two}.
We have not managed to find any non-conditional upper bounds to $G$-COUNT$_\varepsilon$ and leave this for future work.

\section{Conclusion and outlook}
We have shown that, as has long been suspected, many relevant problems in quantum circuit optimisation are indeed hard. In particular, the following problems are all NP-hard:
\begin{itemize}
	\item Optimising $T$-count or $T$-depth.
	\item Optimisation of $T$-count or $T$-depth of unitaries only matching the target unitary within some error-bound on the norm.
	\item Optimising Toffoli count or depth of classical reversible circuits.
	\item For any non-Clifford gate $G$, optimising the $G$-count or $G$-depth of Clifford+$G$ unitaries within some error bound on a target unitary.
	\item Optimising the number of CNOT gates in a Clifford+$T$ circuit.
	\item Optimising the number of Hadamard gates in a Clifford+$T$ circuit.
\end{itemize}
 A number of open questions remain:
\begin{itemize}
	\item What is the exact hardness of T-COUNT? It would make sense for this to be a complete problem for $\text{NP}^{\text{NQP}}$, but there are some problems with NQP requiring exact equality, while Clifford+T is only approximately universal. Even with the natural restriction of NQP to unitaries over the ``Clifford+$T$ domain'' $\mathbb{D}[\omega]$ for which Clifford+$T$ is exactly universal \cite{Giles2013a}, the proof of completeness for Boolean circuits does not extend trivially.
	\item Is it still hard to approximate the optimal T-count within a certain small error? Does it matter whether this error bound is additive or multiplicative?
	\item For Clifford+T we know that exact optimisation and approximate optimisation are both NP-hard, but for general non-Clifford gates $G$ we only know that approximate optimisation of Clifford+$G$ circuits is NP-hard (for small enough error bounds). Is exact optimisation of the $G$-count of Clifford+$G$ circuits NP-hard? 
	\item Optimising circuits consisting of only CNOT gates is a well-studied problem~\cite{markov2008optimal} and is closely related to optimising steps in a Gaussian elimination of a linear system. It seems likely that this problem is also at least NP-hard, but our methods do not suffice to prove that.
\end{itemize}

\medskip
\noindent\textbf{Acknowledgments}: The authors wish to thank Robin Kothari for pointing out that our original upper bound to the T-count problem of $\text{NP}^{\text{NP}^{\#\text{P}}}$ can be improved to $\text{NP}^{\text{NQP}}$, and Tuomas Laakkonen for suggesting that our hardness argument might also apply to Toffoli gate optimisation. MA acknowledges support from the Canada Research Chair program.

\bibliographystyle{plain}
\bibliography{bibliography}

\end{document}